\documentclass[notitlepage,aps,prd,twocolumn,superscriptaddress,floatfix, nofootinbib]{revtex4-1}

\usepackage[utf8]{inputenc}

\usepackage{enumitem}
\usepackage{braket}
\usepackage{hyperref}
\usepackage{comment}
\usepackage{amsmath,amssymb,amsfonts,mathrsfs}
\usepackage{booktabs}
\usepackage{mathtools}
\usepackage{bm}
\usepackage{dcolumn}
\usepackage{graphicx}   
\usepackage{latexsym}
\usepackage[dvipsnames]{xcolor}
\usepackage{colortbl}
\usepackage{amsthm}

\usepackage[acronym]{glossaries}

\hypersetup{
  colorlinks = true,
  linkcolor = blue,
  urlcolor = blue, 
  citecolor = blue
}

\newcommand{\dd}{\mathrm{d}}
\newcommand{\pd}{\partial}
\newcommand{\PD}[2]{\ensuremath{\frac{\partial #1}{\partial #2}}}
\newcommand{\TD}[2]{\ensuremath{\frac{\dd #1}{\dd #2}}}

\newcommand{\ud}[2]{^{#1}{}_{#2}}
\newcommand{\du}[2]{_{#1}{}^{#2}}

\newcommand{\dud}[3]{_{#1}{}^{#2}{}_{#3}}
\DeclareMathOperator{\e}{e}

\newcommand{\M}{\mathcal{M}}
\newcommand{\F}{\mathcal{F}}
\newcommand{\X}{\mathfrak{X}}

\newcommand{\ce}{\coloneqq}
\newcommand{\ec}{\eqqcolon}

\newcommand{\bbR}{\mathbb{R}}

\begin{document}

\newtheorem{definition}{Definition}[section]
\newtheorem{example}{Example}[section]
\newtheorem{theorem}{Theorem}[section]
\newtheorem{corollary}{Corollary}[section]
\newtheorem{lemma}{Lemma}[section]
\newtheorem{remark}{Remark}[section]
\newtheorem{proposition}{Proposition}[section]

\newpage

\title{Autoparallels and the Inverse Problem of the Calculus of Variations}

\author{Lavinia Heisenberg}
\email{heisenberg@thphys.uni-heidelberg.de}
\affiliation{Institute for Theoretical Physics, University of Heidelberg, Philosophenweg 16
D-69120	Heidelberg
Germany}

\date{\today}


\begin{abstract}
We prove that autoparallel curves associated with a torsion-free but not necessarily metric-compatible affine connection can be derived from an action principle. We explicitly construct the action functional and show by standard variational techniques that it produces the desired equations. Our analysis is based on systematically solving the inverse problem of the calculus of variation and the associated Helmholtz conditions. This demonstrates that the dynamics of autoparallels admit a consistent variational formulation even in the presence of non-metricity. Our results provide a variational framework for particle motion in metric-affine geometries and thereby contribute to the mathematical foundations of the geodesic principle in relativistic gravity.
\end{abstract}


\pacs{98.80.Cq}

\maketitle


\section{Introduction}\label{sec:intro}
Einstein's original formulation of General Relativity (GR) builds on the assumption of a torsion-free and metric-compatible affine structure. More general geometric frameworks, and with them a rich landscape of gravity theories, naturally arise once one allows the affine structure of spacetime to be independent of the metric~\cite{Heisenberg:2018vsk, Heisenberg:2023lru}. In a metric-affine geometry, spacetime is equipped with both a metric tensor and an affine connection, which need not be compatible. The failure of metric compatibility is measured by the non-metricity tensor, while the antisymmetric part of the connection defines the torsion. In particular, GR can be formulated equivalently in terms of curvature, torsion, or non-metricity, a correspondence often referred to as the geometric trinity of gravity~\cite{Heisenberg:2018vsk, BeltranJimenez:2019esp, Heisenberg:2023lru}.

The three theories making up the geometric trinity, namely GR, the teleparallel equivalent of GR (TEGR), and the symmetric teleparallel equivalent of GR (STEGR), are based on different geometric postulates, but their field equations possess the same solution space. This is the sense in which they are equivalent. However, conceptual differences persist and carry over into other theories of gravity based on metric-affine geometry. In particular, the motion of test particles may be affected by the particular type of geometry used to formulate a given theory of gravity.

In GR, test particles are postulated to follow geodesics. Geometrically, these are curves which extremize the proper time functional and may be interpreted as locally shortest or extremal paths. Equivalently, they satisfy the condition that their tangent vectors are parallel transported with respect to the Levi-Civita connection. The coincidence of these two characterizations of geodesics---the variational and the affine---relies crucially on the metric compatibility and torsion-free nature of the Levi-Civita connection.

In the presence of an independent affine connection, the description of free-fall motion becomes more subtle. Two a priori distinct notions of preferred curves arise. On the one hand, autoparallels are defined as curves whose tangent vectors are parallel transported with respect to the affine connection; they represent the straightest possible curves associated with the connection. On the other hand, geodesics are obtained as extremals of a point-particle action which only depends on the metric. While in (pseudo-) Riemannian geometry these two notions coincide, as mentioned above, in a generic metric-affine geometry they constitute two genuinely distinct concepts. This leads to an apparent ambiguity in the kinematics of freely falling particles.

This ambiguity raises a fundamental question at the interface of geometry and variational principles: to what extent can autoparallel curves of a general affine connection be obtained as extrema of a variational principle? In this letter we address the slightly more restricted question of whether a reparametrization-invariant scalar worldline action exists,  which only depends on the embedding of the particle worldline into spacetime and whose Euler-Lagrange equations reproduce the autoparallel equations of a torsion-free affine connection with arbitrary non-metricity. We provide a definitive answer in the affirmative to this question. To our knowledge, this is the first time an action functional generating autoparallels of a general torsion-free affine connection has been constructed.

 We achieve this by examining and solving the inverse problem of the calculus of variation. In particular, we show that, in the absence of torsion, the Helmholtz conditions can always be satisfied for a given metric and a given non-metricity tensor. We provide an explicit expression for the action functional and prove that it indeed reproduces the sought-after autoparallel equations by means of standard variational techniques. 

The letter is organized as follows. In Section~\ref{sec:GeometricPreliminaries} we review the necessary elements of metric-affine geometry to make the exposition self-contained. In Section~\ref{sec:InverseProblem} we formulate the inverse problem of the calculus of variation, adapted to our equations of interest---namely the autoparallel equation. We subsequently show how to solve the Helmholtz conditions, which culminates in a proof of existence of an action functional for the autoparallel equations in absence of torsion.  We then explicitly state the action functional and prove that it produces the desired equations. We conclude with a summary and outlook in Section~\ref{sec:discussion}.


\section{Geometric Preliminaries}\label{sec:GeometricPreliminaries}
Tensor fields evaluated at different points of a manifold cannot be directly compared. The reason is that these fields are defined over different tangent spaces, which cannot be canonically identified with each other. However, these tangent spaces are isomorphic to each other, which allows us to introduce the concept of an affine connection $\nabla$ and the related concept of parallel transport of tensor fields. With these concepts in hand, we can transport a given field from one point $p$ of the manifold along a curve $\gamma$ to a different point $q$ in order to perform a comparison. In the case of a vector field, we can carry out this process such that the vector remains parallel to itself in an appropriate sense.

Throughout this letter, $\M$ is assumed to be a smooth, connected, $n$-dimensional manifold. The set of all smooth functions from $\M$ to $\bbR$ is denoted by $\F(\M)$, $\X(\M)$ is the set of all smooth vector fields on $\M$, and $\Omega(\M)$ denotes the set of all smooth $1$-forms. We now introduce an affine connection.

\begin{definition}[Affine connection]\label{def:connection}
An \emph{affine connection} on $\mathcal M$ is a map
\[
\nabla : \X(\M) \times \X(\mathcal M) \to \X(\M)\,
\qquad
(X,Y) \mapsto \nabla_X Y,
\]
which is $\F(\M)$-linear in the first argument, $\bbR$-linear in the second one, and it satisfies the Leibniz rule for the second argument.
\end{definition}
The definition of an affine connection is manifestly coordinate-independent. For practical purposes, it is often desirable to work in local coordinates. This is achieved point-wise by evaluating $\nabla$ on the coordinate basis $\{\pd_a\}$ of a given tangent space and defining the connection coefficients $\Gamma\ud{a}{bc}$ as
\begin{align}
	\nabla_{\pd_b} \partial_c \ec \Gamma\ud{a}{bc} \partial_a.
\end{align}

Notice that $\nabla$ is not $\F(\M)$-bilinear and therefore not a tensor. Indeed, one can confirm by direct computation that $\Gamma\ud{a}{bc}$ does \emph{not} transform in a tensorial way.

Once a connection has been introduced, one can define the notion of parallel transport.

\begin{definition}[Parallel transport]\label{def:paralleltransport}
Let $\gamma : I \subset \mathbb R \to \M$ be a smooth curve.
A vector field $X\in\X(\M)$ along $\gamma$ is said to be \emph{parallel transported along $\gamma$}
if
\[
\nabla_{\dot\gamma} X = 0 ,
\]
where $\dot\gamma$ denotes the tangent vector field to $\gamma$.
\end{definition}

Parallel transport depends on both, the connection and the curve. In general, parallel transport between two points of $\mathcal M$ is path dependent.

Autoparallel curves on $\M$ serve as counterparts to straight lines in flat space. Physically, one can interpret these curves as delineating the paths of particles which are uninfluenced by external forces. Mathematically, a straight line is a trajectory that preserves its tangent vector under parallel transport. Thus, an autoparallel curve is naturally defined as follows.

\begin{definition}[Autoparallel curves]\label{def:autoparallel}
Let $\nabla$ be an affine connection on $\M$.
A smooth curve $\gamma : I\subset\bbR \to \M$ is called an \emph{autoparallel}
if its tangent vector is parallel transported along itself, i.e.
\[
\nabla_{\dot\gamma} \dot\gamma = 0 .
\]
\end{definition}

In a local chart, and using an affine parametrization, the autoparallel equation takes the form
\begin{equation}\label{eq:autoparallel}
	\TD{^2 \gamma^{a}}{\lambda^2} + \Gamma\ud{a}{bc}\TD{\gamma^b}{\lambda}\TD{\gamma^c}{\lambda}= 0 ,
\end{equation}
where $\lambda$ is an affine parameter along the curve.

Apart from the general affine connection of Definition~\ref{def:connection} we assume that the manifold is endowed with an independent metric. In a general metric-affine geometry, the connection
and the metric are completely independent from each other. 
\begin{definition}[Metric-affine geometry]
	A \emph{metric-affine geometry} is a triple $(\M, g, \nabla)$, where $g$ is a smooth, non-degenerate symmetric $(0,2)$-tensor field and $\nabla$ an affine connection.
\end{definition}
The signature of the metric has no influence on the results of this letter. We therefore keep it unspecified.

Any metric-affine geometry can be characterized by three independent tensors, which are constructed from the affine connection and the metric. The simplest tensor to consider is the torsion tensor.

\begin{definition}[Torsion tensor]
	The \emph{torsion tensor} is an $\F(\M)$-trilinear map
	\[
	T: \Omega^1(\M)\times\X(\M)\times\X(\M)\to \F(\M)
	\]
	defined by
	\[
	T(\omega, X, Y) = \omega(\nabla_X Y) - \omega(\nabla_Y X) - \omega([X,Y])
	\]
	for any $1$-form $\omega\in\Omega(\M)$ and vector fields $X,Y\in\X(\M)$, and where $[X,Y]$ denotes the Lie bracket.
\end{definition}
It follows directly from the definition that the torsion tensor is independent of the metric and that it is anti-symmetric in the last two slots:
\begin{align}
	T(\omega, X, Y) = - T(\omega, Y, X).
\end{align}
In a local coordinate system, the components of the torsion tensor are given by
\begin{align}
	T\ud{a}{bc} = 2\Gamma\ud{a}{[bc]}.
\end{align}
Next, we introduce the curvature tensor, which is also independent of the metric.
\begin{definition}[Curvature tensor]
	The \emph{curvature tensor} is an $\F(\M)$-quadrilinear map
	\[
	R:\X(\M)\times\X(\M)\times \X(\M)\times\Omega^1(\M) \to \F(\M)
	\] 
	defined by
	\[
		R(X,Y,Z,\omega) = \omega(\nabla_X\nabla_Y Z) - \omega(\nabla_Y \nabla_X Z) - \omega(\nabla_{[X,Y]}Z)
	\] 
	for any $1$-form $\omega\in\Omega(\M)$ and vector fields $X,Y,Z\in \X(\M)$. 	
\end{definition}
This definition directly implies that $R$ is anti-symmetric in the first two slots:
\begin{align}
	R(X,Y,Z,\omega) = - R(Y,X,Z,\omega).
\end{align}
In a local coordinate system, the components of the curvature tensor take the form
\begin{align}
	R\du{abc}{d} = 2\pd_{[b} \Gamma\ud{d}{a]c} + 2\Gamma\ud{d}{[b|i}\Gamma\ud{i}{a]c}.
\end{align}
Finally, we introduce the non-metricity tensor, which depends on the affine connection as well as on the metric tensor.
\begin{definition}(Non-metricity tensor)
	The non-metricity tensor is an $\F(\M)$-trilinear map
	\[
	Q:\X(\M) \times \X(\M)\times \X(\M)\to \F(\M)
	\]
	defined by
	\begin{align*}
		Q(X,Y,Z) &= (\nabla_X g)(Y,Z) \\
		&= X(g(Y,Z)) - g(\nabla_X Y, Z) - g(Y,\nabla_X Z)
	\end{align*}
	for any vector fields $X,Y,Z\in\X(\M)$.
\end{definition}
This definition implies that the non-metricity tensor is symmetric in the last two slots:
\begin{align}
	Q(X,Y,Z) = Q(X,Z,Y).
\end{align}
Furthermore, its components in a local coordinate chart are given by
\begin{align}
	Q_{abc} = \nabla_a g_{bc} = \pd_a g_{bc} - 2\Gamma\ud{d}{a(b}g_{c)d}.
\end{align}
In pseudo-Riemannian geometry, neither torsion nor non-metricity are present. This special type of metric-affine geometry serves as a reference model in what follows and we recall some basic facts, beginning with a definition.
\begin{definition}[Pseudo-Riemannian manifold]
	A \emph{pseudo-Riemannian manifold} is a pair $(\M, g)$, where $g$ is a smooth metric tensor and the affine connection is determined by vanishing torsion and vanishing non-metricity.
\end{definition}
As is well-known, there is a \emph{unique} connection for which torsion and non-metricity vanish. This is the \emph{Levi-Civita connection}, whose coefficients with respect to a local coordinate chart we denote by $\left\{a\atop bc\right\}$.  These Christoffel symbols are completely determined by the metric:
\begin{align}
	\left\{a \atop bc\right\} = \frac12 g^{ad}\left(\pd_b g_{cd} + \pd_c g_{bd} - \pd_d g_{bc}\right).
\end{align}
In a generic metric-affine geometry $(\M, g, \nabla)$, we can decompose the general connection $\nabla$ with respect to torsion, non-metricity, and the Christoffel symbols of the Levi-Civita connection:

\begin{proposition}[Decomposition of the connection]\label{pro:disCon}
Let $(\mathcal M,g,\nabla)$ be a metric-affine manifold. Then the connection admits the decomposition
\[
\Gamma\ud{a}{bc} = \left\{a\atop bc\right\} + L\ud{a}{bc} + K\ud{a}{bc},
\]
where $L\ud{a}{bc}$ and $K\ud{a}{bc}$ are the components of the \emph{disformation} and \emph{contortion} tensors, respectively, which are defined as
\begin{align*}
	L\ud{a}{bc} &\ce \frac12 Q\ud{a}{bc} - Q\dud{(b}{a}{c)} \\
	K\ud{a}{bc} &\ce \frac12 T\ud{a}{bc} + T\dud{(b}{a}{c)}.
\end{align*}
\end{proposition}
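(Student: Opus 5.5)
The plan is to work with the lowered-index connection $\Gamma_{abc} \ce g_{ad}\Gamma\ud{d}{bc}$ and to split it into the part symmetric in $b,c$ and the part antisymmetric in $b,c$. The antisymmetric part is fixed by the torsion, since $\Gamma\ud{a}{[bc]} = \frac12 T\ud{a}{bc}$. The symmetric part will be recovered from the coordinate expression for non-metricity by the same Koszul-type cyclic combination that produces the Christoffel symbols. Uniqueness of the Levi-Civita connection, recalled just above the proposition, guarantees that the torsion-free, metric-compatible piece which comes out of this combination is exactly $\left\{a\atop bc\right\}$.

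\textbf{Step 1.} I would lower the index in the formula $Q_{abc} = \pd_a g_{bc} - 2\Gamma\ud{d}{a(b}g_{c)d}$, which gives
\[
\pd_a g_{bc} = Q_{abc} + \Gamma_{cab} + \Gamma_{bac}.
\]
I then write this identity three times with the indices cyclically permuted, namely for $\pd_b g_{cd}$, $\pd_c g_{bd}$ and $\pd_d g_{bc}$. I add the first two and subtract the third.

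\textbf{Step 2.} Each $\Gamma$ is now split as $\Gamma_{abc} = \Gamma_{a(bc)} + \frac12 T_{abc}$. In the combination from Step 1 the symmetric pieces should assemble into $2\Gamma_{d(bc)}$, and the remaining pieces are torsion terms. Solving for $\Gamma_{dbc}$ and raising $d$ with $g^{ad}$ should then produce three contributions. The first is $\left\{a\atop bc\right\}$, coming from the $\pd g$ terms. The second is a combination $-\frac12(Q_{bc}{}^{a}+Q_{cb}{}^{a}-Q^{a}{}_{bc})$, which equals $L\ud{a}{bc}$. The third is $\frac12 T\ud{a}{bc}$ plus symmetrized torsion terms $T_{(b}{}^{a}{}_{c)}$, which together make up $K\ud{a}{bc}$.

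\textbf{Step 3.} As a consistency check, I would verify that the result is correct by symmetry type. The antisymmetric part in $b,c$ of the right-hand side must equal $\frac12 T\ud{a}{bc}$, and since $L$ and $T_{(b}{}^a{}_{c)}$ are symmetric in $b,c$, this holds. Setting $T=0$, the expression $\left\{a\atop bc\right\}+L\ud{a}{bc}$ must then reproduce the given $Q_{abc}$ when substituted back into the definition of $Q$. This back-substitution route is in fact the cleanest way to write the final proof: define $\tilde\Gamma = \left\{a\atop bc\right\}+L+K$, show that $\tilde\Gamma$ has the same torsion and non-metricity as $\Gamma$, and note that the difference of two connections is a tensor $D\ud{a}{bc}$. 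A tensor with $D\ud{a}{[bc]}=0$ and $D_{(c|a|b)}=0$ vanishes, by the standard permutation argument that also proves uniqueness of the Levi-Civita connection.

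\textbf{Main obstacle.} The hard part is index bookkeeping. The sign and ordering conventions for $Q\dud{(b}{a}{c)}$ and $T\dud{(b}{a}{c)}$ must match the convention $Q_{abc}=\nabla_a g_{bc}$ and the torsion convention $T\ud{a}{bc}=2\Gamma\ud{a}{[bc]}$. In particular, which slot carries the raised index decides whether the torsion cross terms appear with a plus or a minus sign. I would fix these conventions by checking the back-substitution identity term by term, first $\tilde\Gamma\ud{a}{[bc]} = \frac12 T\ud{a}{bc}$ and then $\nabla_a g_{bc}=Q_{abc}$. If a sign mismatch shows up, I would adjust the definitions of $L$ and $K$ to the paper's index placement accordingly.
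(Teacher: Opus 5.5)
Your proposal is correct. The paper gives no proof of this proposition; it quotes it as a standard fact of metric-affine geometry, so there is no argument in the paper to compare against.

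Your Steps 1--2 are the standard Koszul-type derivation and they go through as planned. Carrying them out gives
\[
\Gamma_{dbc} = \frac12\left(\pd_b g_{cd} + \pd_c g_{bd} - \pd_d g_{bc}\right) - \frac12\left(Q_{bcd} + Q_{cbd} - Q_{dbc}\right) + \frac12\left(T_{dbc} - T_{cbd} - T_{bcd}\right).
\]
After raising $d$, the sign issue you flag as the main obstacle resolves in favour of the statement exactly as written:
\begin{itemize}
\item Symmetry of $Q$ in its last two slots gives $Q_{bc}{}^{a} = Q_{b}{}^{a}{}_{c}$, so the $Q$-terms are exactly $L\ud{a}{bc}$.
\item Antisymmetry of $T$ in its last two slots gives $-T_{cb}{}^{a} = T_{c}{}^{a}{}_{b}$, so the $T$-terms are exactly $\frac12 T\ud{a}{bc} + T\dud{(b}{a}{c)} = K\ud{a}{bc}$.
\end{itemize}
This holds with the paper's conventions $Q_{abc}=\nabla_a g_{bc}$, $T\ud{a}{bc}=2\Gamma\ud{a}{[bc]}$ and $\nabla_{\pd_b}\pd_c = \Gamma\ud{a}{bc}\pd_a$.

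You should drop the contingency of ``adjusting the definitions of $L$ and $K$''. They are fixed by the statement, and the proof only has to confirm them, which it does with no adjustment. Also, in the direct route you do not need uniqueness of the Levi-Civita connection. The $\pd g$ combination is literally the paper's formula for $\left\{a\atop bc\right\}$.

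Your Step 3 back-substitution is an equally valid self-contained proof.
\begin{itemize}
\item The Christoffel part of $\tilde\Gamma$ cancels the $\pd_a g_{bc}$ term in $\tilde Q_{abc}$.
\item One checks directly that $L_{cab}+L_{bac} = -Q_{abc}$.
\item One also checks $K_{cab}+K_{bac}=0$, i.e.\ contortion is antisymmetric in its first and third slots.
\item Hence $\tilde\Gamma$ has the same torsion and non-metricity as $\Gamma$. The difference $D_{abc}$ is then symmetric in $b,c$ and antisymmetric in $a,c$, and your permutation argument forces $D=0$.
\end{itemize}
The direct route actually derives the formula. The back-substitution only verifies a formula you already have, and it needs the uniqueness lemma to close. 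Either one alone suffices.
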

This proposition makes clear one sense in which a pseudo-Riemannian geometry acts as a reference model for more general metric-affine geometries, since the Christoffel symbols of the Levi-Civita connection inevitably appear in the local expression for~$\Gamma\ud{a}{bc}$.

Using the above decomposition, we can express the autoparallel equation in a suggestive manner:

\begin{corollary}[Autoparallel equation, local form]
In a local chart, and using the decomposition of Proposition~\ref{pro:disCon}, the affinely parametrized autoparallel equation~\eqref{eq:autoparallel} takes the form
\begin{align}\label{eq:autoparallelL}
	\TD{^2\gamma^{a}}{\lambda^2} + \left\{a\atop bc\right\}\TD{\gamma^b}{\lambda} \TD{\gamma^c}{\lambda} = -\left(L\ud{a}{bc} + K\ud{a}{bc}\right)\TD{\gamma^b}{\lambda} \TD{\gamma^c}{\lambda}.
\end{align}
\end{corollary}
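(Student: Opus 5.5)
The plan is to substitute the decomposition of Proposition~\ref{pro:disCon} directly into the affinely parametrized autoparallel equation~\eqref{eq:autoparallel}. Since~\eqref{eq:autoparallel} holds in a local chart with coefficients $\Gamma\ud{a}{bc}$ defined by $\nabla_{\pd_b}\pd_c = \Gamma\ud{a}{bc}\pd_a$, I first recall why it has that form. Write $\dot\gamma = \dot\gamma^c\pd_c$ with $\dot\gamma^c = \dd\gamma^c/\dd\lambda$. Using $\F(\M)$-linearity in the first slot and the Leibniz rule in the second slot of Definition~\ref{def:connection}, we get $\nabla_{\dot\gamma}\dot\gamma = (\ddot\gamma^a + \Gamma\ud{a}{bc}\dot\gamma^b\dot\gamma^c)\pd_a$. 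Setting this to zero per Definition~\ref{def:autoparallel} gives~\eqref{eq:autoparallel}. This step is taken as given in the paper, so I only cite it.

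Next I insert $\Gamma\ud{a}{bc} = \left\{a\atop bc\right\} + L\ud{a}{bc} + K\ud{a}{bc}$. The quadratic term splits by linearity into three pieces. I move the disformation and contortion pieces to the right-hand side, which yields~\eqref{eq:autoparallelL} immediately.

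One could remark that only the symmetric part of each coefficient survives contraction with $\dot\gamma^b\dot\gamma^c$. The antisymmetric piece $\tfrac12 T\ud{a}{bc}$ in $K$ therefore drops out, and only $T\dud{(b}{a}{c)}$ contributes. This is not needed for the statement as written, though it is a useful check.

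There is no real obstacle. The only point requiring care is that $\left\{a\atop bc\right\}$, $L$ and $K$ are evaluated at $\gamma(\lambda)$, and that the decomposition of Proposition~\ref{pro:disCon} holds pointwise in the same chart. Both are immediate, because the proposition is a chart-wise identity of coefficient functions.
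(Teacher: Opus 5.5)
Your proposal is correct and follows the same route as the paper, which states the corollary without a separate proof: substitute the decomposition of Proposition~\ref{pro:disCon} into~\eqref{eq:autoparallel} and move the $L$ and $K$ terms to the right-hand side. Your side remark is also correct, though the statement does not need it: only the symmetric part $T\dud{(b}{a}{c)}$ of the contortion survives contraction with $\dot\gamma^b\dot\gamma^c$.
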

The left hand side is recognized to be equal to the left hand side of the geodesic equation, while the right hand side acquires the interpretation of a force which induces a deviation from the geodesic path. 

We recall that in a pseudo-Riemannian geometry, curves of extremal length between two points can be found by means of a variational principle.
\begin{definition}[Geodesics]\label{def:metricgeodesic}
Let $(\mathcal M,g)$ be a pseudo-Riemannian manifold.
A smooth curve $\gamma : I \to \mathcal M$ is called a \emph{geodesic} if it is an extremal
of the functional
\[
	\mathcal{S}[\gamma] = \int_I \sqrt{\lvert g_{ab}\dot{\gamma}^{a} \dot{\gamma}^{b}\rvert}\, \dd\lambda .
\]
\end{definition}
Varying this functional, while holding the boundary points fixed, results in the geodesic equation.

\begin{proposition}[Geodesic equation]\label{pro:geodesics}
Geodesics can be re-parametrized such that they satisfy the equation
	\begin{equation}\label{eq:geodesic}
		\TD{\gamma^{a}}{\lambda^2} + \left\{ a \atop bc \right\} \TD{\gamma^b}{\lambda}\TD{\gamma^c}{\lambda} = 0 ,
	\end{equation}
where $\left\{a\atop bc\right\}$ are the Christoffel symbols of the Levi-Civita connection.
\end{proposition}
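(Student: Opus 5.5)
We derive the Euler--Lagrange equations of $\mathcal S[\gamma]$ from Definition~\ref{def:metricgeodesic} and then exploit its reparametrization invariance to pass to an affine parameter. We work on the open set where $\dot\gamma$ is not null, i.e.\ $g_{ab}\dot\gamma^a\dot\gamma^b\neq 0$, so that $\epsilon\ce\operatorname{sign}(g_{ab}\dot\gamma^a\dot\gamma^b)$ is locally constant. With this sign the Lagrangian is
\[
\mathcal L=\sqrt{\epsilon\, g_{ab}(\gamma)\dot\gamma^a\dot\gamma^b},
\]
which is smooth there. Null curves need a separate treatment, e.g.\ via the quadratic Lagrangian discussed below, and we set them aside.

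First, we compute the two partial derivatives
\[
\PD{\mathcal L}{\dot\gamma^a}=\frac{\epsilon\, g_{ab}\dot\gamma^b}{\mathcal L},
\qquad
\PD{\mathcal L}{\gamma^a}=\frac{\epsilon\,\pd_a g_{bc}\dot\gamma^b\dot\gamma^c}{2\mathcal L}.
\]
Varying with fixed endpoints gives the Euler--Lagrange equation
\[
\TD{}{\lambda}\!\left(\frac{g_{ab}\dot\gamma^b}{\mathcal L}\right)-\frac{\pd_a g_{bc}\dot\gamma^b\dot\gamma^c}{2\mathcal L}=0 .
\]

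Second, we exploit reparametrization invariance. The action is invariant under $\lambda\mapsto\tilde\lambda(\lambda)$ with $\tilde\lambda'>0$, so the extremal condition holds in any parametrization. We therefore choose $\lambda$ to be arc length (proper time), for which $\mathcal L=1$ along the curve. This is possible because $\mathcal L>0$ on the non-null set, so $s(\lambda)=\int\mathcal L\,\dd\lambda$ is a strictly increasing reparametrization. In this parameter the $1/\mathcal L$ factors drop out and the equation becomes
\[
\TD{}{\lambda}\bigl(g_{ab}\dot\gamma^b\bigr)-\tfrac12\pd_a g_{bc}\dot\gamma^b\dot\gamma^c=0 .
\]
Expanding, $\TD{}{\lambda}(g_{ab}\dot\gamma^b)=g_{ab}\ddot\gamma^b+\pd_c g_{ab}\dot\gamma^c\dot\gamma^b$. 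We symmetrize the last term in $b,c$ to write it as $\tfrac12(\pd_b g_{ac}+\pd_c g_{ab})\dot\gamma^b\dot\gamma^c$. Contracting with $g^{da}$ and using the Christoffel formula stated above then gives exactly \eqref{eq:geodesic}, with $\ddot\gamma^a$ understood (the displayed equation has a typo, $\dd\gamma$ for $\dd^2\gamma$).

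The main subtlety is the second step: the Euler--Lagrange equations for $\mathcal L$ are degenerate, since the Hessian $\pd^2\mathcal L/\pd\dot\gamma\pd\dot\gamma$ has $\dot\gamma$ in its kernel. They therefore only determine the curve up to reparametrization, and justifying the choice $\mathcal L=\mathrm{const}$ is what gives the statement its ``can be re-parametrized'' form. A cleaner alternative we would mention is the quadratic Lagrangian $\tfrac12 g_{ab}\dot\gamma^a\dot\gamma^b$. Its Euler--Lagrange equations give \eqref{eq:geodesic} directly and imply conservation of $g_{ab}\dot\gamma^a\dot\gamma^b$, so its solutions are exactly the constant-speed parametrizations of the extremals of $\mathcal S$. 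This route also covers the null case.
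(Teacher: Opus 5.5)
Your proof is correct and takes the same route as the paper. The paper only asserts this proposition, but in the proof of its main theorem it carries out the identical Euler--Lagrange computation for the square-root Lagrangian (with $H_{ab}$ in place of $g_{ab}$), reaching the non-affinely parametrized equation with $(\dd\ln\mathcal L/\dd\lambda)\,\dot\gamma^d$ on the right and then imposing $\dot{\mathcal L}=0$. Setting $\mathcal L=1$ before expanding, as you do, is the same step taken earlier; your sign $\epsilon$, your justification of the reparametrization, and your exclusion of null curves are extra care the paper leaves implicit.
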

A comparison between the geodesic equation~\eqref{eq:geodesic} and the autoparallel equation~\eqref{eq:autoparallelL} shows that they are the same if and only if the geometry is pseudo-Riemannian. On a more general metric-affine manifold, however, the geodesics and autoparallel curves are genuinely distinct concepts. 

In gravitational theories based on torsion or non-metricity~\cite{BeltranJimenez:2017tkd, BeltranJimenez:2018vdo,BeltranJimenez:2019esp, Heisenberg:2023lru}, the existence of a more general affine connection thus introduces the question which paths test particle follow: Geodesics or autoparallel curves? On purely theoretical grounds, postulating that test particles follow geodesic curves might seem more appealing, because these curves are derivable from an action principle. In this letter we show that the autoparallel equation~\eqref{eq:autoparallelL} of any torsion-free geometry is determined by an action principle similar to the one in Definition~\ref{def:metricgeodesic}. 

\section{Autoparallels and the Inverse Problem of the Calculus of Variations}\label{sec:InverseProblem}

In this section we formulate the variational problem for autoparallel equations on manifolds equipped with a metric and an independent affine connection which is torsion-free but not metric-compatible. We show that the autoparallel equation~\eqref{eq:autoparallelL} can be derived from an action principle. 

\subsection{Existence of an action functional}
\begin{definition}[Scalar worldline action]
	A \emph{scalar worldline action} is a functional of the form
	\[
	\mathcal S[\gamma] = \int_{I} \mathcal{L}\bigl(\gamma(\lambda), \dot\gamma(\lambda)\bigr)\, \dd\lambda ,
	\]
	where $\mathcal{L} : T\mathcal M \to \mathbb R$ is a smooth scalar function on the tangent bundle, and $\gamma : I \to \mathcal M$ is a smooth curve with fixed endpoints.
\end{definition}

We assume throughout that the action is invariant under orientation-preserving reparametrizations of $\lambda$, unless stated otherwise.

\begin{definition}[Local worldline action]
	A scalar worldline action is said to be \emph{local} if the Lagrangian $\mathcal{L}$
	depends only on $\gamma(\lambda)$ and $\dot\gamma(\lambda)$ at the same parameter value.
	It is said to be \emph{nonlocal} if it depends on integrals along the curve.
\end{definition}

In GR, the action functional which produces the geodesic equation is interpreted as the proper time of a test particle. In metric-affine geometries with vanishing torsion, a generalized proper time can be introduced~\cite{Romero_2012,Avalos17, DelhomTime}, which takes the form
\begin{align}\label{eq:ModifiedProperTime}
	\mathcal{S}[\gamma] = \int_I \e^{-\frac{\omega(\lambda)}{2}}\sqrt{|g_{ab}\dot{\gamma}^{a} \dot{\gamma}^b|}\,\dd\lambda
\end{align}
with
\begin{align}\label{eq:DefOmega}
	\omega(\lambda) \ce \int_{\lambda_1}^\lambda \frac{Q_{abc} \dot{\gamma}^{a}\dot{\gamma}^b\dot{\gamma}^c}{g_{ab}\dot{\gamma}^{a} \dot{\gamma}^b}\dd \lambda'\,.
\end{align}
This is an example of a \emph{non-local} action functional which, more importantly, \emph{fails} to re-produce the autoparallel equation~\eqref{eq:autoparallelL}. Only when the integrand of $\omega$ is an exact $1$-form does one obtain a local action. Moreover, in this special case the Euler-Lagrange equations can be interpreted as autoparallel equations with a non-metricity tensor of Weyl form:
\begin{align}
	Q_{abc} = \partial_a \omega\, g_{bc}.
\end{align}
For later reference, we provide a precise definition for Weyl-type connections: 
\begin{definition}[Weyl-type connection]
A torsion-free affine connection $\nabla$ on $(\mathcal M,g)$ is said to be of
\emph{Weyl-type} if there exists a smooth $1$-form $\omega \in \Omega(\M)$ such that the non-metricity tensor takes the form
\[
Q_{abc} = \omega_a g_{bc}.
\]
\end{definition}

Rather than seeking modifications of the generalized proper time functional~\eqref{eq:ModifiedProperTime} or otherwise trying to guess the correct functional for reproducing the autoparallel equation for a \emph{generic} choice of non-metricity, our approach is to systematically solve the inverse problem of the calculus of variation. 

The fundamental question of the inverse problem of the calculus of variation can be formulated as follows: Let $\gamma:I\subset \bbR\to\M$ be a smooth curve subjected to the equation
\begin{align}\label{eq:EOMGeneral}
	\ddot{\gamma}^{a} - F^{a}(\gamma, \dot{\gamma}) = 0
\end{align}
for some ``force'' $F^{a}$. Under which conditions is the existence of a symmetric, non-degenerate tensor $H_{ab} = H_{ab}(\gamma, \dot{\gamma})$ and a Lagrangian $\mathcal{L}=\mathcal{L}(\gamma, \dot{\gamma})$ guaranteed, such that
\begin{align}
	H_{ab} \left(\ddot{\gamma}^b - F^b\right) = \frac{\dd}{\dd \lambda}\PD{\mathcal{L}}{\dot{\gamma}^{a}} - \PD{\mathcal{L}}{\gamma^{a}}
\end{align}
is true?
The Sonin-Douglas theorem~\cite{InverseProblemBook} provides an answer to this question.
\begin{theorem}[Sonin-Douglas]
	The following two statements are equivalent:
	\begin{itemize}
		\item[a)] The system~\eqref{eq:EOMGeneral} is variational (i.e., it can be derived from a Lagrangian and its associated Euler-Lagrange equations);
		\item[b)] There exists a symmetric, non-degenerate tensor $H_{ab}$ which, together with the force $F^{a}$, satisfy the Helmholtz conditions
		\begin{itemize}
			\item[(H1)] $\PD{H_{ab}}{\dot{\gamma}^c} - \PD{H_{cb}}{\dot{\gamma}^{a}} = 0$
			\item[(H2)] $\frac{\dd H_{ab}}{\dd \lambda} + \frac12 \left(H_{ac}\PD{F^c}{\dot{\gamma}^b} + H_{bc} \PD{F^c}{\dot{\gamma}^{a}}\right) = 0$ 
			\item[(H3)] $\left(\PD{H_{ab}}{\gamma^c} - \PD{H_{cb}}{\gamma^{a}}\right) F^{b} + H_{ab} \PD{F^{b}}{\gamma^c} - H_{cb} \PD{F^{b}}{\gamma^{a}} - \frac12 \dot{\gamma}^b\PD{}{\gamma^{b}}\left(H_{ad}\PD{F^d}{\dot{\gamma}^c} - H_{cd}\PD{F^d}{\dot{\gamma}^{a}}\right)$.
		\end{itemize}
	\end{itemize}
\end{theorem}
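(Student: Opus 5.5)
The plan is to prove both directions through the identification $H_{ab} = \partial^2\mathcal L/\partial\dot\gamma^a\partial\dot\gamma^b$. We write the Euler--Lagrange expression of a general $\mathcal L(\gamma,\dot\gamma)$ as
\begin{align*}
E_a(\mathcal L) = \frac{\partial^2\mathcal L}{\partial\dot\gamma^a\partial\dot\gamma^b}\ddot\gamma^b + \frac{\partial^2\mathcal L}{\partial\dot\gamma^a\partial\gamma^b}\dot\gamma^b - \frac{\partial\mathcal L}{\partial\gamma^a}.
\end{align*}
The identity $H_{ab}(\ddot\gamma^b - F^b) = E_a(\mathcal L)$ must hold for all curves. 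Since $\ddot\gamma$ is then arbitrary, its coefficient forces $H_{ab} = \partial^2\mathcal L/\partial\dot\gamma^a\partial\dot\gamma^b$. Non-degeneracy of $H$ corresponds to regularity of $\mathcal L$, and symmetry is automatic.

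For a)$\Rightarrow$b), (H1) is immediate because $H$ is the Hessian of a function in $\dot\gamma$, so $\partial H_{ab}/\partial\dot\gamma^c$ is totally symmetric. For (H2) and (H3), I would use the remaining identity $-H_{ab}F^b = \dot\gamma^b\partial_b\partial_{\dot a}\mathcal L - \partial_a\mathcal L$ and differentiate it. Differentiating with respect to $\dot\gamma^c$ and symmetrizing in $(a,c)$ should give (H2), once $\frac{\dd}{\dd\lambda}$ is read as the total derivative along the flow, $\dot\gamma^c\partial_{\gamma^c} + F^c\partial_{\dot\gamma^c}$. Antisymmetrizing the same derivative in $(a,c)$ gives an expression for $\partial_{[c}\partial_{\dot a]}\mathcal L$ in terms of $H$ and $F$. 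Differentiating the identity with respect to $\gamma^c$, antisymmetrizing, and eliminating mixed derivatives using the previous relation should then give (H3).

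For b)$\Rightarrow$a), which I expect to be the main obstacle, we must construct $\mathcal L$ from $H$. The first step is to use (H1): for fixed $\gamma$, the $\dot\gamma$-dependence of $H_{ab}$ satisfies the integrability conditions for a Hessian. By the Poincaré lemma applied twice in the fibre variables (locally, on a star-shaped domain), we can write $H_{ab} = \partial_{\dot a}\partial_{\dot b}\mathcal L_0$. This determines $\mathcal L_0$ only up to terms $A_a(\gamma)\dot\gamma^a + B(\gamma)$.

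The next step is to show that the residual $R_a := E_a(\mathcal L_0) - H_{ab}(\ddot\gamma^b - F^b)$, which is independent of $\ddot\gamma$, can be cancelled by a suitable choice of $A$ and $B$. The condition (H2) should imply that $R_a$ is affine in $\dot\gamma$, say $R_a = \alpha_{ab}(\gamma)\dot\gamma^b + \beta_a(\gamma)$, with $\alpha$ antisymmetric. The condition (H3) should then imply that $\alpha$ is closed and $\beta$ is exact. Since adding $A_a\dot\gamma^a + B$ to $\mathcal L_0$ shifts $E_a$ by $(\partial_bA_a - \partial_aA_b)\dot\gamma^b - \partial_aB$, the Poincaré lemma on $\mathcal M$ (locally) lets us choose $A$ and $B$ to cancel $R_a$.

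The delicate bookkeeping is in deriving affinity and closedness of $R_a$ from (H2) and (H3). I would carry this out in the order just given. Here (H3) is to be read as the expression set equal to zero, and in the standard formulation the final term of (H3) involves the total derivative $\frac{\dd}{\dd\lambda}$ along the flow. The statement is local, which is all that is used in the sequel.
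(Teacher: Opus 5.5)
The paper gives no proof of this theorem; it quotes it from the cited monograph, so there is no in-paper argument to compare yours with. Your proposal is the classical constructive proof, and it is correct.

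\textbf{Direction a)$\Rightarrow$b).} Differentiate $-H_{ab}F^b=\dot\gamma^b\partial_{\gamma^b}\partial_{\dot\gamma^a}\mathcal L-\partial_{\gamma^a}\mathcal L$ with respect to $\dot\gamma^c$. Its symmetric part in $(a,c)$ is (H2), after using total symmetry of $\partial H_{ab}/\partial\dot\gamma^c$. Its antisymmetric part expresses the antisymmetrized mixed derivatives in terms of $H$ and $F$. The $\gamma^c$-derivative then gives (H3).

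\textbf{Direction b)$\Rightarrow$a).} (H1) gives the fibrewise potential $\mathcal L_0$. (H2) is exactly the vanishing of the symmetric part of $\partial R_a/\partial\dot\gamma^c$. Hence $\partial^2R_a/\partial\dot\gamma^c\partial\dot\gamma^d$ is symmetric in $(c,d)$ and antisymmetric in $(a,c)$, so it vanishes. (H3) gives closedness, after which the term $A_a\dot\gamma^a+B$ removes $R_a$.

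\textbf{What your route buys.} Compared with citing the literature, you get a self-contained local argument in which each Helmholtz condition has a visible job. It also produces the Lagrangian explicitly. In the paper's application ($H$ independent of $\dot\gamma$, and $F^a=-\Gamma\ud{a}{bc}\dot\gamma^b\dot\gamma^c$ with $\Gamma$ the Christoffel symbols of $H$), one already has $R_a\equiv 0$ for $\mathcal L_0=\frac12 H_{ab}\dot\gamma^a\dot\gamma^b$. Your remark that non-degeneracy of $H$ is regularity of $\mathcal L$ is the right reading of a). Note that the reparametrization-invariant square-root action used later in the paper is not regular, since its Hessian annihilates $\dot\gamma$.

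\textbf{A correction to your side remark on (H3).} Once ``$=0$'' is appended, the paper's version is not a mis-transcription of the textbook condition. It is literally what your own $\gamma^c$-differentiation produces. In b)$\Rightarrow$a), the antisymmetric part of $\partial R_a/\partial\dot\gamma^c$ gives
\begin{align*}
\Omega_{ca}\ce\frac{\partial^2\mathcal L_0}{\partial\gamma^c\,\partial\dot\gamma^a}-\frac{\partial^2\mathcal L_0}{\partial\gamma^a\,\partial\dot\gamma^c}=\alpha_{ac}-\frac12\left(H_{ad}\PD{F^d}{\dot\gamma^c}-H_{cd}\PD{F^d}{\dot\gamma^a}\right).
\end{align*}
Substituting this into the antisymmetrized $\gamma$-derivatives of $R_a$ yields
\begin{align*}
\PD{R_a}{\gamma^c}-\PD{R_c}{\gamma^a}=\dot\gamma^b\PD{\alpha_{ac}}{\gamma^b}+(\mathrm{H3})_{ac},
\end{align*}
where $(\mathrm{H3})_{ac}$ is the paper's expression verbatim, including its first term and the operator $\dot\gamma^b\partial/\partial\gamma^b$. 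Setting $(\mathrm{H3})_{ac}=0$ and splitting by degree in $\dot\gamma$ gives $\partial_{[c}\beta_{a]}=0$ and $\partial_{[c}\alpha_{ab]}=0$, so your bookkeeping closes. With $R_a\equiv 0$, the same identity is your a)$\Rightarrow$b) derivation.

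Now differentiate the first display with respect to $\dot\gamma^b$. Modulo (H1)--(H2) this gives
\begin{align*}
\PD{H_{ab}}{\gamma^c}-\PD{H_{cb}}{\gamma^a}=-\frac12\,\frac{\partial}{\partial\dot\gamma^b}\left(H_{ad}\PD{F^d}{\dot\gamma^c}-H_{cd}\PD{F^d}{\dot\gamma^a}\right).
\end{align*}
So the first term of the paper's (H3) is exactly the $F^b\partial/\partial\dot\gamma^b$ part of the total derivative in the textbook form. The two versions are therefore equivalent, but you must use one or the other. Upgrading $\dot\gamma^b\partial/\partial\gamma^b$ to $\dd/\dd\lambda$ while keeping the first term double-counts. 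It imposes the spurious extra condition $F^b\bigl(\partial H_{ab}/\partial\gamma^c-\partial H_{cb}/\partial\gamma^a\bigr)=0$, which already fails for $H_{ab}=g_{ab}$ and the geodesic force of a generic metric.
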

The Helmholtz conditions can be read in two distinct ways: Given $H_{ab}$, what is the most general force $F^{a}$ consistent with $\emph{(H1)-(H3)}$? Or: Given a force $F^{a}$, what is the most general $H_{ab}$ consistent with the Helmholtz conditions?

We offer two examples for the first scenario.

\begin{proposition}
	Let $g_{ab}$ be a metric on $\M$ (the signature is irrelevant for what follows) and set $H_{ab} = g_{ab}$. The Helmholtz conditions then imply that the most general force consistent with this choice of $H_{ab}$ is
	\begin{align}
		F^{a} = -\left\{a\atop bc\right\}\dot{\gamma}^b \dot{\gamma}^c + P\ud{a}{b}\dot{\gamma}^b + S^{a}
	\end{align}
	for some tensors $P\ud{a}{b}(\gamma)$ and $S^{a}(\gamma)$, and where $\left\{a\atop bc\right\}$ are the Christoffel symbols of the Levi-Civita connection.
\end{proposition}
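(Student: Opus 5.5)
We plug $H_{ab}=g_{ab}(\gamma)$ into the Helmholtz conditions one at a time. Each condition constrains the dependence of $F^a$ on $\dot\gamma$ and on $\gamma$.

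\emph{Step 1: (H1).} Since $g_{ab}$ does not depend on $\dot\gamma$, (H1) holds identically and gives no information.

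\emph{Step 2: (H2).} Here $\frac{\dd g_{ab}}{\dd\lambda}=\pd_c g_{ab}\,\dot\gamma^c$. Lowering indices with $g$ and writing $F_a\ce g_{ac}F^c$, the condition reads
\begin{align*}
\pd_c g_{ab}\,\dot\gamma^c + \tfrac12\bigl(\PD{F_a}{\dot\gamma^b}+\PD{F_b}{\dot\gamma^a}\bigr)=0 .
\end{align*}
We split off the Levi-Civita part by setting $F^a = -\left\{a\atop bc\right\}\dot\gamma^b\dot\gamma^c + G^a$. Using $\Gamma_{a,bc}\ce g_{ad}\left\{d\atop bc\right\}$ and the identity $\Gamma_{a,bc}+\Gamma_{b,ac}=\pd_c g_{ab}$, we find
\begin{align*}
-\tfrac12\bigl(\PD{}{\dot\gamma^b}(\Gamma_{a,cd}\dot\gamma^c\dot\gamma^d)+(a\leftrightarrow b)\bigr) = -\pd_c g_{ab}\,\dot\gamma^c .
\end{align*}
This exactly cancels the first term, so (H2) reduces to
\begin{align*}
\PD{G_a}{\dot\gamma^b}+\PD{G_b}{\dot\gamma^a}=0 .
\end{align*}
In other words, the velocity-Jacobian of $G_a$ is antisymmetric, which is the Killing equation in velocity space. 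Differentiating twice and using the standard permutation argument (as for Killing vectors in flat space) gives $\pd_{\dot\gamma}\pd_{\dot\gamma} G=0$. Hence
\begin{align*}
G_a = P_{ab}(\gamma)\dot\gamma^b + S_a(\gamma), \qquad P_{ab}=-P_{ba}.
\end{align*}
Raising indices yields the claimed form, with $P\ud{a}{b}$ and $S^a$ functions of position only.

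\emph{Step 3: (H3).} It remains to substitute this $F^a$ into (H3) and check what further restrictions arise. This is the main obstacle: it is an index-heavy computation. The quadratic terms should cancel by the symmetries of the Levi-Civita connection; I would verify this most cleanly in Riemann normal coordinates at a point, where $\pd g=0$ and $\left\{a\atop bc\right\}=0$, so that only curvature terms survive, and these cancel by the Riemann tensor symmetries. The remaining terms linear and constant in $\dot\gamma$ give conditions of the form
\begin{align*}
\pd_{[c}S_{a]}=0,\qquad \nabla_{[c}P_{ab]}=0,
\end{align*}
where $\nabla$ here is the Levi-Civita derivative. These are the familiar statements that $P$ is a closed 2-form (a Lorentz-type force) and $S$ is locally a gradient.

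The proposition as stated only asserts the functional form of $F^a$, with tensors $P$ and $S$ depending on $\gamma$ alone. So the essential content is Step 2, and Step 3 only confirms consistency and imposes integrability on $P$ and $S$ without changing the form. Finally, tensoriality of $P$ and $S$ follows because $F^a+\left\{a\atop bc\right\}\dot\gamma^b\dot\gamma^c$ transforms as a vector.
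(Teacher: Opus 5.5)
Your proof is correct. The paper does not prove this proposition itself; it only refers to the cited inverse-problem literature, so there is no in-paper argument to compare yours against. Your route is the standard one: (H1) is trivial, (H2) reduces to a Killing equation in velocity space, and (H3) gives integrability conditions. It is complete for the stated implication.

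What your version adds is precision. (H2) already forces $P_{ab}=g_{ac}P\ud{c}{b}$ to be antisymmetric, and (H3) forces $\pd_{[c}P_{ab]}=0$ and $\pd_{[c}S_{a]}=0$. So the paper's phrase ``most general force \ldots\ for some tensors $P\ud{a}{b}$ and $S^a$'' states only a necessary condition. The converse fails for generic $P$ and $S$, and your constrained form is the correct characterization.

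You can also shortcut Step 3. With $H_{ab}=g_{ab}$ fixed, (H2) is affine in $F$ and (H3) is linear in $F$. Moreover, $-\left\{a\atop bc\right\}\dot\gamma^b\dot\gamma^c$ satisfies both, because it is the Euler--Lagrange force of $\frac12 g_{ab}\dot\gamma^a\dot\gamma^b$. Hence $G$ satisfies the homogeneous (H3),
\begin{align*}
\pd_c G_a-\pd_a G_c-\tfrac12\dot\gamma^b\pd_b\Bigl(\PD{G_a}{\dot\gamma^c}-\PD{G_c}{\dot\gamma^a}\Bigr)=0 .
\end{align*}
Inserting $G_a=P_{ab}\dot\gamma^b+S_a$ gives $\dot\gamma^b(\pd_cP_{ab}+\pd_aP_{bc}+\pd_bP_{ca})+\pd_cS_a-\pd_aS_c=0$ at once. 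This avoids Riemann normal coordinates altogether. Using them would in any case first require checking that the non-manifestly covariant expression in (H3) transforms tensorially.
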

Clearly, this force gives rise to the geodesic equation as a special case. It also contains the Lorentz equation, which describes the motion of a charged test particle in the presence of an electromagnetic and gravitational field, as a special case. A proof of this proposition can be found in~\cite{InverseProblemBook}.

By a minimal and straightforward adaptation of the technique used in~\cite{InverseProblemBook}, one can also prove the following 
\begin{proposition}
	Let $g_{ab}$ be a metric on $\M$ and $\omega\in \F(\M)$. Set $H_{ab} = \e^{-\omega} g_{ab}$. Then the Helmholtz conditions imply that the most general force consistent with this choice of $H_{ab}$ is
	\begin{align}
		F^{a} = -\left(\left\{a\atop bc\right\} + L\ud{a}{bc}\right)\dot{\gamma}^b \dot{\gamma}^c + P\ud{a}{b}\dot{\gamma}^b + S^{a}
	\end{align}
	for some tensors $P\ud{a}{b}(\gamma)$ and $S^{a}(\gamma)$, and where the disformation tensor $L\ud{a}{bc}$ is constructed from the Weyl non-metricity tensor $Q_{abc} = \partial_a\omega \,g_{bc}$. Thus, it is given by
	\begin{align}
		L\ud{a}{bc} = \frac12 g^{ad}\pd_d \omega\, g_{bc} - \frac12 \left(\pd_b \omega \,\delta\ud{a}{c} + \pd_c\omega \, \delta\ud{a}{b}\right).
	\end{align}
\end{proposition}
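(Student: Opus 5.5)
We propose to argue exactly as for the preceding proposition with $H_{ab}=g_{ab}$, using the fact that $H_{ab}=\e^{-\omega}g_{ab}$ depends only on $\gamma$. The plan is to read the Helmholtz conditions as constraints on $F^{a}$ and integrate them in order.

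Since $H_{ab}$ is independent of $\dot\gamma$, condition (H1) holds trivially. Condition (H2) becomes
\[
\dot\gamma^c\pd_c\bigl(\e^{-\omega}g_{ab}\bigr) + \tfrac12\e^{-\omega}\Bigl(g_{ac}\PD{F^c}{\dot\gamma^b} + g_{bc}\PD{F^c}{\dot\gamma^{a}}\Bigr)=0 .
\]
Its left-hand side is linear in $\dot\gamma$. Writing $F_a\ce g_{ab}F^b$ and $\Phi_{ab}\ce\PD{F_a}{\dot\gamma^b}$, the condition fixes the symmetric part $\Phi_{(ab)}=\dot\gamma^c(\pd_c\omega\, g_{ab}-\pd_c g_{ab})$. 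Here $g_{ac}$ can be pulled inside the derivative because $g$ does not depend on $\dot\gamma$. Differentiating once more in $\dot\gamma$ and using the symmetry of second derivatives of $F_a$, the standard argument of~\cite{InverseProblemBook} (combine the three cyclic permutations of indices) determines $\PD{^2F_a}{\dot\gamma^b\pd\dot\gamma^c}$ completely. It equals minus twice the lowered coefficient $g_{ad}\bigl(\left\{d\atop bc\right\}+L\ud{d}{bc}\bigr)$, where $L$ is built from $Q_{abc}=\pd_a\omega\,g_{bc}$.

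Concretely, the cyclic combination of $\pd_c g_{ab}$ reproduces the Christoffel symbols. The combination of $\pd_c\omega\,g_{ab}$ gives $\tfrac12(\pd_b\omega\,g_{ac}+\pd_c\omega\,g_{ab}-\pd_a\omega\,g_{bc})$, which is precisely $-g_{ad}L\ud{d}{bc}$ for the stated $L$. Hence $F^{a}$ is quadratic in $\dot\gamma$ with the claimed quadratic part, plus a remainder $P\ud{a}{b}(\gamma)\dot\gamma^b+S^{a}(\gamma)$. Condition (H2) additionally forces $\e^{-\omega}P_{(ab)}=0$, so $g_{ac}P\ud{c}{b}$ is antisymmetric.

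Next, (H3) is used to confirm consistency. It should impose only differential constraints on $P$ and $S$, namely closedness-type conditions on $\e^{-\omega}P_{ab}$ and $\e^{-\omega}S_a$, as in the Lorentz-force case. The quadratic piece should satisfy (H3) identically: after the conformal rescaling $\tilde g=\e^{-\omega}g$, the connection $\left\{\cdot\right\}+L$ is just the Levi-Civita connection of $\tilde g$. This observation is the cleanest route. Setting $\tilde g_{ab}=\e^{-\omega}g_{ab}$, one checks that $\left\{a\atop bc\right\}_{\tilde g}=\left\{a\atop bc\right\}+L\ud{a}{bc}$, and then the previous proposition applies verbatim with $g$ replaced by $\tilde g$. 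The tensors $P$ and $S$ are relabelled accordingly, with indices raised by $\tilde g^{ab}=\e^{\omega}g^{ab}$.

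The main obstacle is a bookkeeping check rather than a conceptual one. It is the verification that the conformal transformation of the Christoffel symbols,
\[
\left\{a\atop bc\right\}_{\tilde g}=\left\{a\atop bc\right\}-\tfrac12\bigl(\delta\ud{a}{b}\pd_c\omega+\delta\ud{a}{c}\pd_b\omega-g_{bc}g^{ad}\pd_d\omega\bigr),
\]
matches the stated $L\ud{a}{bc}$ with the correct signs. Equivalently, one checks that $Q_{abc}=\pd_a\omega\,g_{bc}$ inserted into Proposition~\ref{pro:disCon} gives this expression. Once this identity holds, the result follows from the preceding proposition.
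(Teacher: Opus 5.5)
Your proposal is correct, but its decisive step differs from what the paper has in mind. The paper gives no written proof. It only says the statement follows by a minimal adaptation of the technique of the cited book, i.e.\ by re-running the Helmholtz analysis with $H_{ab}=\e^{-\omega}g_{ab}$: (H2) fixes the quadratic part of $F^a$ and kills the symmetric part of $g_{ac}P\ud{c}{b}$, and (H3) must then be re-checked with the extra $\pd\omega$ terms.

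Your first half is exactly this computation, and it is right. The cyclic combination gives $\pd^2F_a/\pd\dot\gamma^b\pd\dot\gamma^c=-2g_{ad}\bigl(\left\{d\atop bc\right\}+L\ud{d}{bc}\bigr)$ with the stated $L$.

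Your main argument goes another way. You observe that $\tilde g_{ab}=\e^{-\omega}g_{ab}$ is itself a metric, and that the Helmholtz conditions involve only $H_{ab}$ and $F^a$, not the background $g$. Hence the preceding proposition applies to $\tilde g$ without change. The only remaining input is the conformal transformation law of the Christoffel symbols, which does reproduce the stated $L\ud{a}{bc}$. Equivalently, $\nabla_a(\e^{-\omega}g_{bc})=0$ when $Q_{abc}=\pd_a\omega\,g_{bc}$.

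What this buys you:
\begin{itemize}
\item (H3) comes for free.
\item You get the precise conditions on $P$ and $S$: $\e^{-\omega}g_{ac}P\ud{c}{b}$ must be a closed $2$-form and $\e^{-\omega}g_{ab}S^b$ a closed $1$-form.
\item It makes explicit why the quadratic part passes (H3): it is a Levi-Civita connection. This is the same mechanism the paper later uses for a general $\dot\gamma$-independent $H_{ab}$ in its main theorem and in the remark on $H_{ab}=fg_{ab}$.
\end{itemize}

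The direct adaptation, by contrast, is self-contained and does not lean on the previous proposition, but it requires redoing the (H3) bookkeeping.

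One small point: no relabelling of $P\ud{a}{b}$ and $S^a$ is needed, since they already carry their natural index positions. Only their lowered versions pick up the factor $\e^{-\omega}$.
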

The action which produces the first term, proportional to $\dot{\gamma}^b \dot{\gamma}^c$, is precisely~\eqref{eq:ModifiedProperTime} with $\omega$ taken to be an arbitrary smooth function which only depends on $\gamma$, rather than being defined through~\eqref{eq:DefOmega}.

Now we turn our attention to the second scenario. Our goal is to find the most general $H_{ab}$ for the autoparallel equation in a metric-affine geometry $(\M, g, \nabla)$, where $\nabla$ is taken to be torsion-free but otherwise not further specified. Let $\Gamma\ud{a}{bc} = \left\{a\atop bc\right\} + L\ud{a}{bc}$ be the coefficients of this affine connection. The force term is then given by
\begin{align}
	F^{a} = -\Gamma\ud{a}{bc}\dot{\gamma}^b\dot{\gamma}^c
\end{align}
and the Helmholtz conditions can be re-written as
\begin{itemize}
	\item[(H1)] $\PD{H_{ab}}{\dot{\gamma}^c} - \PD{H_{cb}}{\dot{\gamma}^{a}} = 0$
	\item[(H2)] $\dot{\gamma}^{a}\nabla_a H_{bc} - \dot{\gamma}^{a} \dot{\gamma}^{e} \Gamma\ud{d}{ae} \PD{H_{bc}}{\dot{\gamma}^d} = 0$
	\item[(H3)] $\dot{\gamma}^{a} \dot{\gamma}^{b}( H_{ic} R\du{kab}{c} - H_{kc}R\du{iab}{c} + \Gamma\ud{c}{ka}\nabla_b H_{ic} - \Gamma\ud{c}{ia}\nabla_b H_{kc} + \Gamma\ud{c}{ab}\nabla_i H_{kc} - \Gamma\ud{c}{ab}\nabla_k H_{ic}) = 0$ 
\end{itemize}
where we made use of the affine connection $\nabla$ and its associated curvature tensor $R$. Condition (H2) can be read as a first order partial differential equation for $H_{ab}$. However, the second term in this equation is not tensorial. For this equation to be consistent with the tensorial nature of $H_{ab}$, we have to demand that the second term vanishes for every possible curve $\gamma$. For a non-trivial $\Gamma\ud{a}{bc}$ this is only possible if
\begin{align}
	\PD{H_{ab}}{\dot{\gamma}^{c}} = 0.
\end{align}
Thus, $H_{ab}$ can only depend on $\gamma$, which implies that condition (H1) is trivially satisfied. Moreover, because $\dot{\gamma}^{a} \nabla_a H_{bc} = 0$ has to hold for any curve $\gamma$, we can conclude that 
\begin{align}\label{eq:MetricCompatibility}
	\nabla_a H_{bc} = 0.
\end{align}
We recall that $H_{ab}$ is symmetric and non-degenerate. This suggests that we can regard $H_{ab}$ as a second metric on $\M$ (albeit a metric which is \emph{not} used to raise or lower indices) and~\eqref{eq:MetricCompatibility} can be read as the condition of metric compatibility.

Because $\nabla$ is torsion-free, and because $H_{ab}$ is symmetric and non-degenerate, we can solve~\eqref{eq:MetricCompatibility} for the connection coefficients in the usual manner:
\begin{align}\label{eq:ChristoffelSymbolsH}
	\Gamma\ud{a}{bc} = \frac12 (H^{-1})^{ad}\left(\partial_b H_{cd} + \partial_c H_{bd} - \partial_d H_{bc}\right).
\end{align}
Here, $(H^{-1})^{ab}$ is the inverse of $H_{ab}$, which is guaranteed to exists due to $H$'s non-degeneracy. Importantly, $(H^{-1})^{ab}$ is \emph{not} obtained from $H_{ab}$ by raising its indices with the metric $g_{ab}$. The right hand side of~\eqref{eq:ChristoffelSymbolsH} can be regarded as the Christoffel symbols of $H_{ab}$ and the whole equation tells us that $H_{ab}$ encodes information about the metric $g_{ab}$, its Christoffel symbols, and the non-metricity tensor $Q_{abc}$.

What remains to be verified is condition (H3). Upon using~\eqref{eq:MetricCompatibility}, it greatly simplifies to
\begin{align}\label{eq:H3Simplified}
	\dot{\gamma}^{a} \dot{\gamma}^{b}( H_{ic} R\du{kab}{c} - H_{kc}R\du{iab}{c} ) = 0.
\end{align}
By defining $\bar{R}_{abcd} \ce R\du{abc}{k}H_{kd}$, we can rewrite this condition equivalently as
\begin{align}\label{eq:H3SimplifiedNewVariables}
	\dot{\gamma}^{a} \dot{\gamma}^{b}(\bar{R}_{kabi} - \bar{R}_{iabk}) = 0.
\end{align}
Using~\eqref{eq:ChristoffelSymbolsH}, one can verify directly that $\bar{R}_{abcd}$ has the same symmetries as the Riemann tensor\footnote{The Riemann tensor is the curvature tensor of the Levi-Civita connection. This tensor has more symmetries than the curvature tensor of a generic affine connection.}:
\begin{align}\label{eq:SymmetriesRbar}
	\bar{R}_{abcd} &= - \bar{R}_{bacd} \notag\\
	\bar{R}_{abcd} &= - \bar{R}_{abdc} \notag\\
	\bar{R}_{abcd} &=  \phantom{-}\bar{R}_{cdab}
\end{align}
This can also be seen by noticing that $R\du{abc}{d}$ is the curvature tensor with respect to the Levi-Civita symbols~\eqref{eq:ChristoffelSymbolsH} of the effective metric $H_{ab}$.

Using these symmetries, it follows that~\eqref{eq:H3SimplifiedNewVariables}, and therefore~\eqref{eq:H3Simplified}, is satisfied without imposing any further restrictions on $H_{ab}$ or $R\du{abc}{d}$. 

In summary, we found that the autoparallel equation is derivable from a Lagrangian if there exists a tensor $H_{ab}$ with the following four properties: It is symmetric, non-degenerate, it is independent of $\dot{\gamma}^{a}$, and it satisfies~\eqref{eq:MetricCompatibility}. If such a tensor exists, it follows directly from these properties, together with the torsion-freeness of $\nabla$, that $H_{ab}$ encodes information about the metric and the non-metricity tensor. This is the content of equation~\eqref{eq:ChristoffelSymbolsH}.

To definitively answer the question whether an action for the autoparallel equations exists, we need to address the question whether~\eqref{eq:MetricCompatibility} admits non-trivial solutions and whether these solutions are unique, once initial conditions have been specified. Since~\eqref{eq:MetricCompatibility} constitutes a first order system of partial differential equations with more equations than unknown functions, the system is overdetermined and there have to be integrability conditions. Indeed, Frobenius-type conditions need to hold in our case:
\begin{align}
	(\nabla_a\nabla_b - \nabla_b\nabla_a)H_{cd} = 0.
\end{align}
This integrability condition can be re-written in terms of the curvature tensor $\bar{R}_{abcd}$:
\begin{align}
	\bar{R}_{abcd} + \bar{R}_{abdc} = 0.
\end{align}
Because of the anti-symmetry in the last two indices (see equation~\eqref{eq:SymmetriesRbar}), we conclude that the integrability condition is always satisfied. This ensures that locally a unique solution $H_{ab}$ exists once $H_{ab}$ has been specified in one point (initial condition).

We have therefore shown that a Lagrangian exists whose associated Euler-Lagrange equations produce the autoparallel equations for a torsion-free affine connection, provided the solution to~\eqref{eq:MetricCompatibility} satisfies $\det H_{ab} \neq 0$.

\subsection{The action and its Euler-Lagrange equations}
The four defining properties of $H_{ab}$ suggest that it should be treated as an auxiliary metric on $(\M, g, \nabla)$. Once the metric and the non-metricity are given, the associated $H_{ab}$ can be constructed.
The main result of this letter is the following
\begin{theorem}[Action of autoparallel equation]
	The Euler-Lagrange equations associated with the variation of the functional 
	\begin{align}\label{eq:ActionQ}
	\mathcal{S}[\gamma] = \int_I \sqrt{|H_{ab}\dot{\gamma}^{a} \dot{\gamma}^b|}\,\dd\lambda,
	\end{align}
	where $H_{ab}$ is symmetric and a non-degenerate solution of~\eqref{eq:MetricCompatibility}, produce the autoparallel equations~\eqref{eq:autoparallelL} for a generic disformation tensor $L\ud{a}{bc}$ and a vanishing contortion tensor $K\ud{a}{bc}$:
	\begin{align}\label{eq:autoparallelQ}
	\TD{^2\gamma^{a}}{\lambda^2} + \left\{a\atop bc\right\}\TD{\gamma^b}{\lambda} \TD{\gamma^c}{\lambda} = -L\ud{a}{bc} \TD{\gamma^b}{\lambda} \TD{\gamma^c}{\lambda}.
\end{align}
\end{theorem}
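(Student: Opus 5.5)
The plan is to treat \eqref{eq:ActionQ} as the length functional of the auxiliary metric $H_{ab}$ and to reduce the statement to the classical result of Proposition~\ref{pro:geodesics}, applied to $H$ instead of $g$. Since $H_{ab}$ depends only on $\gamma$ by the analysis of (H2), the Lagrangian $\mathcal{L}=\sqrt{|H_{ab}\dot\gamma^a\dot\gamma^b|}$ is a smooth function on $T\M$ away from $H$-null tangent vectors. I will restrict to curves with $H_{ab}\dot\gamma^a\dot\gamma^b\neq 0$, so that $\mathcal L$ is smooth and the sign $\epsilon=\mathrm{sgn}(H_{ab}\dot\gamma^a\dot\gamma^b)$ is locally constant along the curve.

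First I compute the Euler--Lagrange equations directly. We have $\partial\mathcal L/\partial\dot\gamma^a=\epsilon H_{ab}\dot\gamma^b/\mathcal L$ and $\partial\mathcal L/\partial\gamma^a=\epsilon\,\partial_a H_{bc}\dot\gamma^b\dot\gamma^c/(2\mathcal L)$. Reparametrization invariance lets me choose $\lambda$ so that $\mathcal L$ is constant along the extremal, i.e.\ $H$-proper-time parametrization. This is the step of Proposition~\ref{pro:geodesics}. The equations then become
\begin{align*}
H_{ad}\ddot\gamma^d+\tfrac12\left(\partial_b H_{cd}+\partial_c H_{bd}-\partial_d H_{bc}\right)\dot\gamma^b\dot\gamma^c=0 .
\end{align*}
Contracting with $(H^{-1})^{ad}$, which exists by non-degeneracy, gives $\ddot\gamma^a+\Gamma^{(H)a}{}_{bc}\dot\gamma^b\dot\gamma^c=0$, where $\Gamma^{(H)}$ denotes the Christoffel symbols of $H$.

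Second I identify these coefficients with the given connection. Because $\nabla$ is torsion-free and $\nabla_aH_{bc}=0$ by \eqref{eq:MetricCompatibility}, the standard Koszul-type manipulation yields \eqref{eq:ChristoffelSymbolsH}. Concretely, I cyclically permute $0=\partial_aH_{bc}-\Gamma^d{}_{ab}H_{dc}-\Gamma^d{}_{ac}H_{bd}$, add two of the permutations and subtract the third, and use $\Gamma^d{}_{[ab]}=0$. Hence $\Gamma^{(H)a}{}_{bc}=\Gamma^a{}_{bc}$. Proposition~\ref{pro:disCon} with $K=0$ then gives $\Gamma^a{}_{bc}=\{a\atop bc\}+L^a{}_{bc}$, and substituting yields \eqref{eq:autoparallelQ}.

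The main obstacle is bookkeeping rather than substance. I must make sure the affine parameter obtained by fixing $\mathcal L=\text{const}$ is an \emph{affine} parameter for $\nabla$. That holds automatically, since the resulting equation is exactly \eqref{eq:autoparallel} with no term proportional to $\dot\gamma^a$. The other point to handle is the $H$-null case, where $\mathcal L$ is not differentiable. There I would either exclude it explicitly or remark that, as in the metric case, one may instead use the quadratic Lagrangian $\tfrac12H_{ab}\dot\gamma^a\dot\gamma^b$, whose Euler--Lagrange equations coincide with those above in affine parametrization.
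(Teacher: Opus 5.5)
Your proposal is correct and follows essentially the same route as the paper: compute the Euler--Lagrange equations of $\sqrt{|H_{ab}\dot\gamma^a\dot\gamma^b|}$, contract with $(H^{-1})^{ad}$, identify the Christoffel symbols of $H$ with $\Gamma\ud{a}{bc}=\left\{a\atop bc\right\}+L\ud{a}{bc}$ via \eqref{eq:ChristoffelSymbolsH}, and pass to a parameter with $\dot{\mathcal L}=0$. The paper keeps the $(\tfrac{\dd}{\dd\lambda}\ln\mathcal L)\dot\gamma^a$ term until the end, whereas you fix the parametrization first; your explicit sign $\epsilon$ and exclusion of $H$-null tangents are small improvements, but fix the index slip in your display, where the free index of $H_{ad}\ddot\gamma^d$ must match the free index $d$ of the bracket.
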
\medskip
We emphasize that the above functional produces the disformation tensor associated with \emph{any} non-metricity tensor we choose, not merely the one of Weyl-type.

\begin{proof}
	Let $\mathcal{L}= \sqrt{|H_{ab}\dot{\gamma}^{a} \dot{\gamma}^b|}$. Then
	\begin{align*}
		\PD{\mathcal{L}}{\gamma^{a}} &= \frac{1}{2\mathcal{L}}\pd_a H_{bc}\dot{\gamma}^{b} \dot{\gamma}^{c} &\text{and} && \PD{\mathcal{L}}{\dot{\gamma}^{a}} &= \frac{1}{\mathcal{L}}H_{ab}\dot{\gamma}^{b}.
	\end{align*}
	The total $\lambda$-derivative of $\PD{\mathcal{L}}{\dot{\gamma}^{a}}$ is given by
	\begin{align*}
		\TD{}{\lambda}\PD{\mathcal{L}}{\dot{\gamma}^{a}} = \frac{1}{\mathcal{L}}\left(H_{ab}\ddot{\gamma}^{b} + \pd_c H_{ab}\dot{\gamma}^{c} \dot{\gamma}^{b}\right) - \frac{\dot{\mathcal{L}}}{\mathcal{L}^2} H_{ab}\dot{\gamma}^{d}.
	\end{align*}
	The Euler-Lagrange operator thus takes the form
	\begin{align*}
		\TD{}{\lambda}\PD{\mathcal{L}}{\dot{\gamma}^{a}} - \PD{\mathcal{L}}{\gamma^{a}} &= \frac{1}{\mathcal{L}}\left(H_{ab}\ddot{\gamma}^{b} + \pd_c H_{ab}\dot{\gamma}^{c} \dot{\gamma}^{b}\right) \\
		&\phantom{=} - \frac{\dot{\mathcal{L}}}{\mathcal{L}^2} H_{ab}\dot{\gamma}^{d} - \frac{1}{2\mathcal{L}}\pd_a H_{bc}\dot{\gamma}^{b} \dot{\gamma}^{c}.
	\end{align*}
	Setting to zero, multiplying by $\mathcal{L}$, contracting with $(H^{-1})^{ad}$, and simplifying yields
	\begin{align*}
		\ddot{x}^d + \frac12(H^{-1})^{ad}\left(\pd_c H_{ab} + \pd_b H_{ac} - \pd_a H_{bc}\right)\dot{\gamma}^{b} \dot{\gamma}^c = \frac{\dot{\mathcal{L}}}{\mathcal{L}}\dot{\gamma}^d.
	\end{align*}
	Using~\eqref{eq:ChristoffelSymbolsH} together with $\Gamma\ud{a}{bc} = \left\{a \atop bc\right\} + L\ud{a}{bc}$, this reduces to
	\begin{align}
		\ddot{x}^{d} + \left(\left\{d \atop bc\right\} + L\ud{d}{bc}\right)\dot{\gamma}^{b} \dot{\gamma}^c = \left(\TD{}{\lambda}\ln \mathcal{L}\right) \dot{\gamma}^d.
	\end{align}
	This is the non-affinely parametrized autoparallel equation~\eqref{eq:autoparallelL} with $K\ud{a}{bc}=0$. By choosing an affine parametrization, defined by $\dot{\mathcal{L}} = 0$, we finally obtain (after relabelling indices)
	\begin{align}\label{eq:FinalEq}
		\ddot{x}^{a} + \left(\left\{d \atop bc\right\} + L\ud{d}{bc}\right)\dot{\gamma}^{b} \dot{\gamma}^c = 0.
	\end{align}
\end{proof}

A few remarks are in order.

\begin{remark}
	If one sets $H_{ab} = f g_{ab}$, for some strictly positive or strictly negative smooth function\footnote{Non-degeneracy of $H_{ab}$ demands $f\neq 0$. Hence, $f$ is either strictly positive or strictly negative.\vfill} $f$, the generalized proper time action~\eqref{eq:ModifiedProperTime} is recovered. Moreover, the connection reduces to one of Weyl-type with an exact $1$-form $\omega_a = \partial_a \omega$, whose potential is given by $\omega = -\ln|f|$.
\end{remark}

\begin{remark}
	In the Symmetric Teleparallel Equivalent of General Relativity (STEGR), one postulates the curvature of the affine connection to vanish. Together with the torsion-freeness of the connection, this implies the existence of the so-called \emph{coincident gauge}. This is a choice of chart for which the Christoffel symbols of $\nabla$ vanish not just in a point, but in an entire open neighbourhood.
	
	In the present context, the effective metric $H_{ab}$ offers an alternative view on the coincident gauge. Since $\bar{R}_{abcd}$ is now postulated to vanish, and since $\nabla$ is torsion-free and compatible with $H_{ab}$ in the sense of equation~\eqref{eq:MetricCompatibility}, we are effectively dealing with a flat metric $H_{ab}$. In a generic gauge, the components of $H_{ab}$ are of course functions of the coordinates. Just like in a generic gauge the Minkowski metric is not a diagonal matrix with constant entries. However, due to the flatness condition, we are guaranteed that there exists a chart in which $H_{ab}$ is constant. Thus, the autoparallel equation is essentially describing geodesics of $H_{ab}$, which in the chart in which $H_{ab}$ is constant become straight lines described by $\ddot{x}^{a}=0$. 
\end{remark}

\begin{remark}
	In the absence of non-metricity, we recover the geodesic equation. This follows directly from~\eqref{eq:FinalEq}. Because of~\eqref{eq:ChristoffelSymbolsH}, $\Gamma\ud{a}{bc} = \left\{a\atop bc\right\}$, and because the Levi-Civita connection is unique, it immediately follows that $H_{ab} = g_{ab}$. In this particular special case, we are guaranteed that $H_{ab}$ is non-degenerate.
\end{remark}

\begin{remark}
	The only obstruction to the existence of an action functional and thus to the variational nature of the autoparallel equation is the degeneracy of $H_{ab}$. It is conceivable that not all solutions to~\eqref{eq:MetricCompatibility} are non-degenerate. 
\end{remark}

\section{Discussion}\label{sec:discussion}
In metric-affine geometries $(\M, g, \nabla)$, upon which the geometric trinity of gravity is based, geodesics and autoparallel curves are genuinely distinct concepts. While geodesics can be derived from an action functional, which has the interpretation of proper time, autoparallel curves could so far not be described in terms of a variational principle. Even though generalizations of proper time functionals in the context of torsion-free manifolds with non-vanishing non-metricity have been studied in the literature~\cite{Romero_2012,Avalos17, DelhomTime}, these functionals can at best produce the autoparallel equation of a Weyl-type connection.

In this letter we proved that an action functional exists, that it is given by
\begin{align*}
	\boxed{\mathcal{S}[\gamma] = \int_I \sqrt{|H_{ab}\dot{\gamma}^{a} \dot{\gamma}^b|}\,\dd\lambda}\nonumber
\end{align*}
and that the autoparallel equation of a torsion-free, but not necessarily metric-compatible connection follow from it via a standard variational procedure:
\begin{align*}
		\boxed{\ddot{x}^{a} + \left(\left\{d \atop bc\right\} + L\ud{d}{bc}\right)\dot{\gamma}^{b} \dot{\gamma}^c = 0} 
\end{align*}
We arrived at this result not by generalizing the functionals described in~\cite{Romero_2012,Avalos17, DelhomTime} or by otherwise guessing the correct form. Instead, we started from the autoparallel equation itself and solved the inverse problem of the calculus of variation.

We found that it is necessary to introduce a symmetric, non-degenerate tensor $H_{ab}$, which satisfies the condition~\eqref{eq:MetricCompatibility}, in order to solve the problem (or similarly~\eqref{eq:ChristoffelSymbolsH}). Because of these properties, we can think of $H_{ab}$ as an effective metric which coexists with the metric $g_{ab}$. Moreover, because of~\eqref{eq:MetricCompatibility} this effective metric encodes information about non-metricity and the Christoffel symbols of the metric $g_{ab}$. Crucially, we proved that the condition~\eqref{eq:MetricCompatibility} can, in principle, always be solved for $H_{ab}$ for a prescribed $g_{ab}$ and a prescribed non-metricity tensor $Q_{abc}$. Hence, the functional~\eqref{eq:ActionQ} is well-defined and produces the autoparallel equations for vanishing torsion but non-vanishing non-metricity.

The only obstruction to the existence of an action functional for the autoparallel equation is the possibility of obtaining degenerate solutions to~\eqref{eq:MetricCompatibility}. Whether such solutions exist cannot be answered at the present stage. However, it is possible to shed some light on this issue by solving~\eqref{eq:MetricCompatibility} in simple models already within GR. We leave this to future work. An other interesting avenue to pursue is to investigate whether analogous results can be proved for metric-affine geometries with torsion and vanishing non-metricity.

We consider it highly interesting to investigate the physical consequences of the disformation force for the motion of freely falling particles in cosmological and astrophysical environments. In particular, the modified geodesic equations derived in equation~\eqref{eq:autoparallelQ} provide a natural framework for studying how non-metricity affects the trajectories of test bodies and the relative acceleration of nearby worldlines. Such effects can be systematically explored in highly symmetric backgrounds. For instance, in cosmological settings one may employ the metric and connection ans\"{a}tze compatible with homogeneous and isotropic geometries \cite{DAmbrosio:2021pnd}, while in astrophysical contexts one can consider the corresponding spherically symmetric configurations \cite{DAmbrosio:2021zpm} describing compact objects such as stars and black holes. Solving the modified geodesic equations in these backgrounds would reveal how the disformation force alters orbital motion, radial infall, and geodesic deviation. While the modified geodesic equations determine the motion of test particles, the gravitational field itself may be specified independently through the choice of the underlying metric-affine theory.
Solving the modified geodesic equations in these backgrounds would reveal how the disformation force modifies orbital motion, radial infall, and geodesic deviation. Concerning the gravitational dynamics, one may remain within STEGR, in which the gravitational field equations are derived from the non-metricity scalar $\mathbb{Q}$~\cite{BeltranJimenez:2017tkd}. The analysis can also be extended to modified theories of the $f(\mathbb{Q})$ gravity type, where the gravitational action is generalized to a function 
$f(\mathbb{Q})$. In either case, the equations governing particle motion follow from the variational principle derived above, while the background geometry is determined by the corresponding gravitational field equations.

The phenomenological implications of such modifications in the geodesic equation could be far-reaching, much like those arising from modifications of the gravitational dynamics~\cite{BeltranJimenez:2019tme, DAmbrosio:2021zpm}. In cosmology, deviations from standard geodesic motion may influence the propagation of matter in the late universe and modify the behavior of large-scale structure tracers. In particular, the disformation force could lead to corrections in the relative acceleration of neighboring geodesics, thereby affecting the growth of cosmic structures and potentially leaving imprints in observables related to galaxy clustering or peculiar velocities.

In the strong-gravity regime around compact objects, the modified equations of motion may lead to observable consequences in the dynamics of particles and photons orbiting black holes. For example, corrections to circular and quasi-circular orbits could modify the location of the innermost stable circular orbit (ISCO), which plays a crucial role in accretion disk physics and in the emission spectra of compact objects. Similarly, alterations in null geodesics may affect the structure of photon spheres and the associated lensing properties of black holes. This could in turn lead to measurable modifications of the black hole shadow, whose size and shape are currently being probed by very long baseline interferometry observations such as those performed by the Event Horizon Telescope. Even small deviations in the effective trajectories of light rays could translate into detectable distortions of the shadow boundary or changes in the lensing pattern near the photon sphere.
Furthermore, the modified geodesic deviation induced by non-metricity may affect tidal forces in the vicinity of compact objects and alter the dynamics of inspiralling bodies in strong gravitational fields. This could potentially lead to corrections in gravitational-wave signals emitted during extreme mass-ratio inspirals or binary mergers. Observatories such as LIGO Scientific Collaboration and Virgo Collaboration are already probing such strong-field regimes with increasing precision, making the study of non-metricity-induced effects particularly timely.

From a practical perspective, the analysis of these phenomena follows a straightforward procedure. One inserts the appropriate symmetry-compatible ansätze for the metric and the affine connection into the modified geodesic equations~\eqref{eq:autoparallelQ} and solves for the resulting particle trajectories. If a variational formulation is desired, the corresponding action~\eqref{eq:ActionQ} can be employed after computing the tensor 
$H_{ab}$ from the metric and the connection using equation~\eqref{eq:MetricCompatibility} (or similarly~\eqref{eq:ChristoffelSymbolsH}). A detailed investigation of these cosmological and astrophysical implications lies beyond the scope of the present work and will be pursued in future studies.


\section*{Acknowledgements}
LH would like to acknowledge financial support from the European Research Council (ERC) under the European Unions Horizon 2020 research and innovation programme grant agreement No 801781. LH further acknowledges support from the Deutsche Forschungsgemeinschaft (DFG, German Research Foundation) under Germany's Excellence Strategy EXC 2181/1 - 390900948 (the Heidelberg STRUCTURES Excellence Cluster). 


\bibliographystyle{apsrev4-2}
\bibliography{reference}

\end{document}